\documentclass[aps,12pt,reprint,groupedaddress,superscriptaddress,onecolumn,longbibliography]{revtex4-2}
\usepackage[usenames,dvipsnames]{xcolor}
\usepackage{amssymb}
\usepackage{graphicx}
\usepackage{amsmath}
\usepackage{amsthm}
\usepackage{dsfont}
\usepackage[bookmarks=true,colorlinks,citecolor=blue,urlcolor=blue]{hyperref}
\usepackage{braket}
\usepackage{babel}
\usepackage{blindtext}
\usepackage{physics}
\usepackage{bm}
\newtheorem{theorem}{Theorem}[section]

\newtheorem{proposition}[theorem]{Proposition}

\begin{document}

\title{Quantum-informed surrogate sampling for combinatorial optimization}

\author{Elisabeth Wybo}
\email{elisabeth.wybo@iqm.tech}
\author{Jernej Rudi Fin\v{z}gar}
\affiliation{IQM Quantum Computers, Georg-Brauchle-Ring 23-25, 80992 München, Germany}

\begin{abstract}
    We introduce Quantum-Informed Surrogate Sampling (QISS), a post-processing framework that generates candidate solutions to combinatorial optimization problems from low-weight correlations of shallow quantum circuits.
    The quantum device estimates local observables, which are directly accessible by repeated measurements and for which a wide range of error-mitigation tools are available, while candidate solutions are generated classically without explicit dependence on the combinatorial optimization problem itself.
    We evaluate QISS on Maximum Cut and Maximum Independent Set problems on $N$ variables and show that only $O(N)$ low-order correlators from shallow circuits suffice to produce competitive solutions that surpass vanilla QAOA.
    For MaxCut on 3-regular graphs, QISS from $p=3$ QAOA correlators outperforms vanilla QAOA at $p=17$ on average, with further improvements possible by warm-starting QAOA.
    We validate the procedure on the 54-qubit IQM Emerald quantum device and demonstrate its noise resilience.
    Our results support a regime for near-term optimization in which shallow circuits serve not as direct samplers but as generators of informative statistics for scalable classical sampling.
\end{abstract}

\maketitle

\section{Introduction}
Combinatorial optimization lies at the heart of many problems in science and engineering, yet many relevant problems remain computationally extremely challenging at scale. Quantum computers offer a qualitatively different computational paradigm, motivating the development of quantum algorithms and hybrid quantum--classical heuristics that may ultimately complement---and, in favorable regimes, potentially surpass---classical optimization methods~\cite{Hogg2000,Hogg2000a,Farhi2014,Montanaro2020,Abbas2023,Dalzell2023,Jordan2025,Herman2025}.
Among these approaches, the Quantum Approximate Optimization Algorithm (QAOA)~\cite{Farhi2014} has emerged as a leading candidate because of its simple alternating-operator structure and its compatibility with relatively shallow circuit implementations~\cite{Pagano2020,Harrigan2021,Ebadi2022,Pelofske2024,Shaydulin2024,He2025}.

A growing body of evidence suggests that QAOA can compete with or even outperform classical solvers in certain regimes~\cite{Boulebnane2024,Boulebnane2024a,Shaydulin2024,Montanaro2025}. However, on near-term hardware, noise and coherence limitations typically restrict practical QAOA implementations to shallow circuits with constant depth. Such circuits are local for sparse instances of bounded degree: any local observable is determined by a bounded reverse light cone whose radius is set by the depth $p$, not by the problem size~\cite{Akshay2021,Farhi2020a,Chou2021,Basso2022,Anshu2022}. 
For optimization, this locality can be a genuine limitation~\cite{Farhi2020a,Farhi2020}: fixed-$p$ QAOA may fail to exploit global problem structure and can inherit known limitations of local classical algorithms on families of sparse graphs~\cite{Chou2021,Basso2022,Chen2023}. Related average-case obstructions can arise in random optimization landscapes, where the geometry of near-optimal solutions (e.g. overlap-gap phenomena) is known to restrict broad classes of efficient algorithms~\cite{Gamarnik2013,Gamarnik2019,Gamarnik2021}.

 However, bounded light cones also allow shallow circuits to estimate low-order statistics efficiently and with relatively low sensitivity to noise~\cite{Franca2020,DePalma2022}. We therefore propose a different division of labor: rather than asking QAOA to produce good solutions directly, we treat it as a source of low-order statistics and delegate the generation of candidate solutions to classical post-processing. Low-weight expectation values are a natural target, since they are amenable to error mitigation, can be estimated to fixed precision with a sample cost independent of system size, and, because each depends only on a bounded subgraph, can be evaluated locally even when the full optimization instance does not fit on the quantum processor~\cite{Dupont2025,Wybo2026}. The classical cost of reproducing these statistics is controlled by the treewidth of the circuit's light cones, which is exponential in the worst case but bounded at the shallow depths and bounded connectivity we consider~\cite{Shi2008}. In contrast, the quantum estimation cost is set by the operator norm and circuit depth and is insensitive to treewidth, whereas classical contraction becomes infeasible as depth or connectivity grow.

\begin{figure*}[t]
    \centering
    \includegraphics[width=\textwidth]{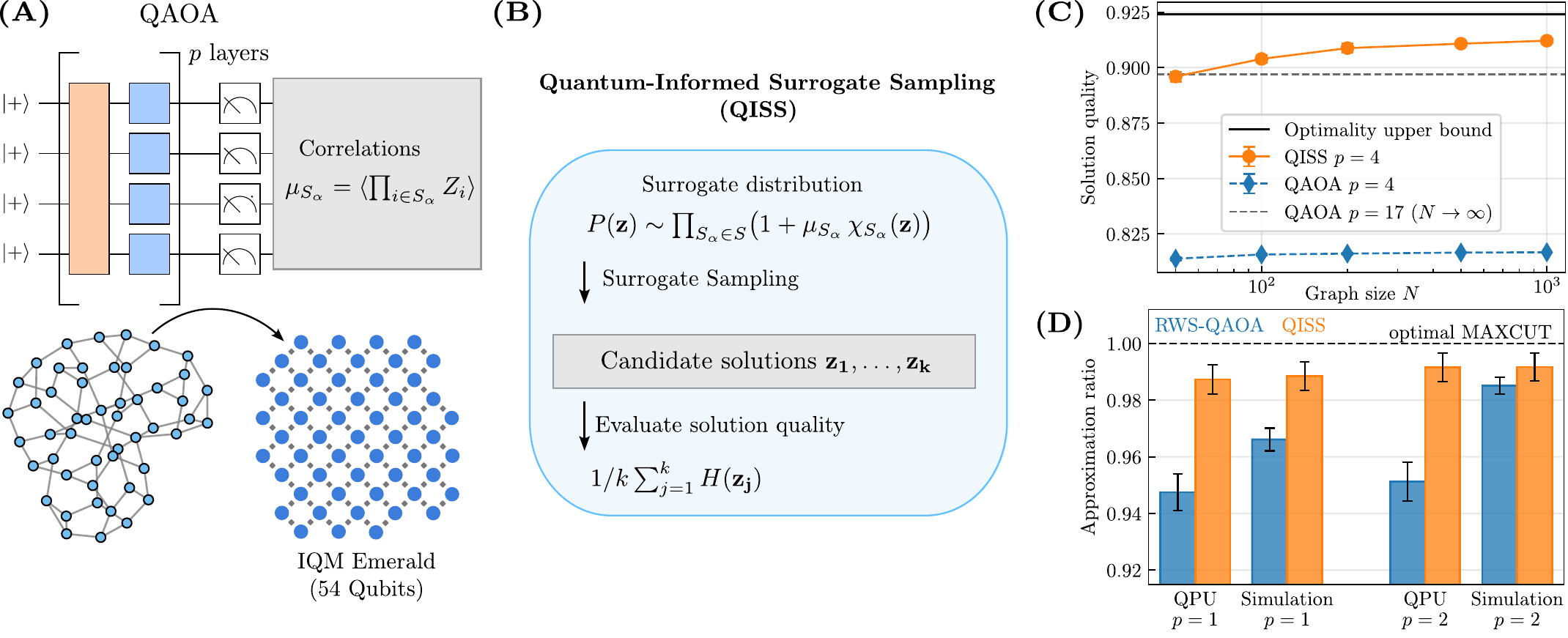}
    \caption{\textbf{Overview of the surrogate sampling procedure.} (A) We estimate the correlations $\mu_{S_\alpha}$ from a depth-$p$ QAOA circuit. The circuit can be executed on a quantum device, such as the 54-qubit IQM Emerald QPU with square-grid connectivity. (B) From a polynomially sized set of correlations, we construct a factor distribution $P(\mathbf{z})$ as a classical surrogate. We sample candidate solutions from it by MCMC and evaluate their quality. We refer to this procedure as Quantum-Informed Surrogate Sampling (QISS). (C) Sampling from the classical surrogate improves the solution quality compared to vanilla QAOA. Here, we consider the MaxCut problem on 3-regular graphs and a $p=4$ QAOA circuit. We include all non-zero weight-2 quantum mean values, i.e. $S=\{ (i,j) \mid d_G(i,j) \le 2p \}$. We use concentrated QAOA angles, making the procedure entirely training-free. (D) Device results obtained from the IQM Emerald QPU, averaged over 10 instances with $N=50$, for regularized warm-start QAOA~\cite{he2026regularizedwarmstartedquantumapproximate} at depths $p=1,2$. Blue bars show the solution quality obtained directly from the RWS-QAOA correlators; orange bars show the quality of the QISS samples generated from the same correlators. We compare the raw QPU correlators against noiseless simulation. Error bars denote the standard error of the mean.}
    \label{fig:idea_sketch}
\end{figure*}

\subsection{Summary}
In this work, we propose a scalable Quantum-Informed Surrogate Sampling (QISS) framework that generates candidate solutions from low-order quantum correlations. The main idea of QISS is sketched in Fig.~\ref{fig:idea_sketch}.
Concretely, we run a QAOA circuit at low depth $p$ on $N$ variables with parameters
$\boldsymbol{\gamma}=(\gamma_1,\dots,\gamma_p)$ and
$\boldsymbol{\beta}=(\beta_1,\dots,\beta_p)$ and estimate a selected set $\{   \mu_{S_\alpha}| S_{\alpha} \in S \}$ of low-weight expectation values of the resulting QAOA state $\ket{\psi_p(\bm{\gamma},\bm{\beta})}$, see Fig.~\ref{fig:idea_sketch}{(A)},
\begin{equation}
    \mu_{S_\alpha}
    =
    \ev{\prod_{i \in S_\alpha} Z_i}{\psi_p(\bm{\gamma},\bm{\beta})}.
    \end{equation}
In practice we only use weights $|S_{\alpha}|=1,2$ for all $S_{\alpha}$ considered. These $\mu_{S_\alpha}$ can be estimated on a quantum device by repeated state preparation and measurement and provide accessible partial information about the QAOA state. 
We then initialize a classical factor model as a surrogate distribution over the solution space $\{-1,+1\}^N$, see Fig.~\ref{fig:idea_sketch}{(B)}
\begin{equation}
    \label{eq:factors}
    P(\bm{z})
   \propto
    \prod_{S_\alpha \in S}
    \bigl( 1 + \mu_{S_\alpha} \, \chi_{S_\alpha}(\bm{z}) \bigr),
\end{equation}
with $\chi_{S_\alpha}(\bm{z}) = \prod_{i \in S_\alpha} z_i$ denoting the parity function and $S$ the collection of factor supports. We draw candidate solutions from this surrogate by Markov-chain Monte Carlo. Each factor involves only the spins in $S_{\alpha}$, so the change in $ P(\bm{z})$ under a spin flip is computed in time proportional to the number of factors containing that spin, making individual updates efficient.
As a post-processing layer, this step is modular and in principle cost-function agnostic.

The purpose of the surrogate distribution is not to reproduce, or even to approximate, the full output distribution of the quantum state. In fact, efficient classical sampling from that distribution would collapse the polynomial hierarchy~\cite{Farhi2016}. Instead, QISS serves as a practical tool for improving optimization performance and scaling to larger problems on quantum computers that are of limited size and subjected to noise.

We find that samples drawn from the surrogate distribution can significantly outperform vanilla QAOA. Solutions produced by QISS for the MaxCut problem on 3-regular graphs from QAOA correlations at depth $p=4$ outperform vanilla QAOA at depth $p=17$ on average~\cite{Farhi2025}, see Fig.~\ref{fig:idea_sketch}{(C)}. In the following, we will also show that our method is competitive with state-of-the-art classical heuristics. 

In addition, we ran the Regularized Warm-Start QAOA (RWS-QAOA) circuits of Ref.~\cite{he2026regularizedwarmstartedquantumapproximate} for MaxCut on 3-regular graphs on the 54-qubit IQM Emerald Quantum Processing Unit (QPU), and used the measured one- and two-point correlators as input to QISS. Fig.~\ref{fig:idea_sketch}{(D)} compares the approximation ratio obtained from the RWS-QAOA correlators alone with that obtained after applying QISS, both for correlators measured on the QPU and from noiseless simulation. Notably, the QISS output seems insensitive to device noise in the considered regime: the approximation ratios obtained from raw QPU correlators and from noiseless correlators are nearly indistinguishable. The noisy correlators already carry enough structure to reach the same near-optimal cuts. These results show that QISS is an effective post-processing strategy.

\subsection{Previous work}

Efforts to improve QAOA-based optimization typically act at one of two ends of the pipeline: either on the output side, by post-processing the statistics or correlations of the measured distribution in a single pass or iteratively, or on the input side, by modifying the ansatz itself through the choice of initial state, mixer, or problem encoding. Our approach is `output-side' but can be combined with `input-side' improvements.

Several output-side approaches aim to extract information from the QAOA distribution beyond the mean or the single best sampled bitstring. The Conditional Value at Risk (CVaR) objective replaces the usual expectation value of the cost function $\ev{H_C}$ by the conditional mean over the lowest $\alpha$-fraction of sampled energies, favoring rare but exceptionally good samples over a better average~\cite{Barkoutsos2020}. 
Quantum-enhanced Markov Chain Monte Carlo uses the QAOA circuit in the proposal step of the Markov chain while keeping the classical Metropolis accept/reject step so that the chain can mix polynomially faster than under local classical proposals while still provably converging to the target Gibbs distribution. ~\cite{Layden2023,Nakano2024,Marshall2026,Kawamata2026}. A related approach uses QAOA samples directly as warm starts for classical heuristics, yielding run-time gains over the classical baselines~\cite{cepaite2025}.
Other approaches use low-order statistics in a hybrid quantum-classical workflow to modify the instance itself or to guide a classical outer loop~\cite{Bravyi2019,Dupont2023,Brady2024,Finzgar2024, Wybo2026}. A prominent example is recursive QAOA~\cite{Bravyi2019}, where one uses low-depth QAOA estimates of two-point observables such as $\langle Z_i Z_j\rangle$ to identify strongly correlated pairs of variables to recursively reduce the problem. In these approaches, the quantum step is used as a correlation oracle rather than as a direct solver which is similar to our method. However, unlike these approaches our method does not rely on an outer loop and is strictly sequential.
Quantum relax-and-round follows a related philosophy, but instead injects measured quantum correlations into a classical relaxation-and-rounding pipeline, so that the eventual rounding step exploits a correlation structure informed by the variational quantum state rather than by a purely classical semidefinite (SDP) or linear (LP) programming relaxation~\cite{Dupont2023}. A related scheme rounds samples drawn from multivariate Gaussians shaped by the quantum correlations~\cite{Martinez2026}. 

Input-side approaches instead modify the QAOA ansatz itself. Warm-start methods encode classical information into the initial state and, in the strongest versions, into the mixer itself. In the framework of Ref.~\cite{Egger2021} a classical solution $\bm{z}^\star\in \{-1,+1\}^N$ defines a biased product state $\bigotimes_i \left(\sqrt{1-z_i^\star}\ket{0}+\sqrt{z_i^\star}\ket{1}\right)$, typically with a small regularization to avoid frozen dynamics, thereby importing part of the classical approximation guarantees into the ansatz rather than starting from the problem-agnostic uniform superposition. Follow-up work made this concrete for MaxCut using SDP-based warm starts, showing that custom mixers aligned with the warm-start state can lead to substantial improvements over vanilla QAOA~\cite{Tate2023,Tate2023a,He2023,Okada2024,he2026regularizedwarmstartedquantumapproximate}. A further line of research reformulates the problem into a more compact or structurally better encoding~\cite{Chancellor2019,Sawaya2023,Bako2025,Sciorilli2025,Ma2026}, exploiting that qubit count, the locality of the cost function, penalty overhead, and availability of constraint-preserving mixers are often determined by the chosen variable encoding. Therefore, logical variables may be represented implicitly through structured correlations rather than direct bit assignments. The Pauli-correlation encoding of Ref.~\cite{Sciorilli2025} represents the problem through Pauli correlations over a polynomially reduced number of qubits. Our sampling framework could potentially decode such an encoding by sampling logical solutions from the measured correlation structure.

\subsection{Organization}
This paper is organized as follows. In Sec.~\ref{sec:background}, we review QAOA and introduce the problem classes we will consider. In Sec.~\ref{sec:factor_distributions}, we introduce the surrogate model based on the mean values extracted from QAOA and the sampling procedure. In Sec.~\ref{sec:results} we present the results. We conclude in Sec.~\ref{sec:concl}

\section{Background} \label{sec:background}
\subsection{Quantum Approximate Optimization Algorithm.}
The Quantum Approximate Optimization Algorithm (QAOA)~\cite{Farhi2014}
is a variational hybrid quantum--classical algorithm designed to address
combinatorial optimization problems that can be expressed as the minimization
of an Ising-type cost Hamiltonian. A common case is a quadratic Hamiltonian on a graph $G(E,V)$ with $|V|=N$
\begin{equation}
H_C = \sum_{ij\in E} J_{ij} Z_i Z_j + \sum_{i \in V} h_i Z_i,
\end{equation}
where $Z_i$ denotes the Pauli-$Z$ operator acting on qubit $i$, and the coefficients $J_{ij}$ and $h_i$ define the problem instance on $G$.
A complementary mixing Hamiltonian
\begin{equation}
H_M = \sum_i X_i,
\end{equation}
where $X_i$ is the Pauli-$X$ operator, generates transitions between the computational basis states and ensures ergodic exploration of the solution space.

Starting from an initial product state, typically $\ket{+}^{\otimes N}$, QAOA alternates between unitaries generated by $H_C$ and $H_M$ to prepare the parameterized state
\begin{equation} \label{eq:qaoa}
|\psi_p(\boldsymbol{\gamma},\boldsymbol{\beta})\rangle =
\prod_{k=1}^{p} e^{-i \beta_k H_M} e^{-i \gamma_k H_C} |\!+\!\rangle^{\otimes N},
\end{equation}
where the parameters
$\boldsymbol{\gamma}=(\gamma_1,\dots,\gamma_p)$ and
$\boldsymbol{\beta}=(\beta_1,\dots,\beta_p)$
are optimized classically to minimize the energy expectation
$\langle H_C \rangle = \langle \psi_p(\boldsymbol{\gamma},\boldsymbol{\beta}) | H_C | \psi_p(\boldsymbol{\gamma},\boldsymbol{\beta}) \rangle$.

The circuit depth $p$ controls the expressiveness of the ansatz. For small $p$, QAOA yields shallow circuits compatible with Noisy Intermediate-Scale Quantum (NISQ) devices~\cite{Preskill2018}, with performance governed by local graph structure. As $p$ increases, correlations propagate across progressively larger regions of the graph, and in the limit of $p\to\infty$ the ansatz can approximate a digitized adiabatic evolution toward the ground state. 

\subsection{Problems: MaxCut and Maximum Independent Set} \label{ssec:models}


To benchmark the sampling framework, we consider two standard Ising optimization problems: Maximum Cut (MaxCut) and Maximum Independent Set (MIS). However, many more NP-hard graph optimization problems can be cast directly into an Ising formulation with binary spins $z_i \in \{-1,+1\}^N$ for which QAOA can be directly applied~\cite{Lucas2014}. 

For the MaxCut problem on a graph $G=(V,E)$, one seeks to partition the vertices into two sets such that the number of edges connecting them is maximized. The corresponding Ising Hamiltonian is
\begin{equation}\label{eq:H_MC}
H_{\mathrm{MAXCUT}} = -\frac{1}{2}\sum_{(i,j)\in E} (1 - Z_i Z_j),
\end{equation}
and the ground state encodes the optimal cut. As a performance metric we will use the cut fraction $-\ev{H_{\mathrm{MAXCUT}}}/|E|$ which quantifies the fraction of edges that are cut. 

On the other hand, the MIS problem aims to find the largest subset of vertices with no adjacent pairs. Defining $z_i = +1$ for included vertices in the independent set and $z_i = -1$ otherwise, one can write
\begin{equation}\label{eq:H_MIS}
H_{\mathrm{MIS}}^{\lambda} = -\sum_i Z_i + \lambda \sum_{(i,j)\in E} \frac{1+Z_i}{2}\frac{1+Z_j}{2},
\end{equation}
where the penalty term proportional to $\lambda\geq 1$ enforces the independence constraint. As a performance metric, we will use the independence ratio which quantifies the fraction of vertices included in the independent set, i.e. $\ev{H_{\mathrm{MIS}}^{\lambda=1}}/N$.

We will consider both problems on random $3$-regular graphs. A key feature of QAOA is that, at fixed depth $p$, it is a \emph{local} algorithm for sparse structures like random regular graphs. Indeed, each application of the cost unitary $e^{-i\gamma_k H_C}$ enlarges the support of a local observable in the Heisenberg picture only along edges of the graph with couplings $J_{ij}\neq 0$, while the mixer $e^{-i\beta_k H_M}$ acts on individual qubits. As a result, a local observable spreads only within a finite depth-$p$ light cone, so the expectation value of any local term in $H_C$ depends only on its radius-$p$ neighborhood. For example, the expectation value of an edge term $Z_i Z_j$ is completely determined by the part of the problem instance lying within graph distance $p$ of the edge $(i,j)$. Couplings outside this light cone do not affect its value. This locality is central to the analytical tractability of low-depth QAOA on sparse $d$-regular graphs~\cite{Basso2021}. On such graphs with sufficiently large girth, the light cones are cycle-free with high probability~\cite{Makover2006} and therefore coincide with a finite rooted $d$-regular tree when $N\rightarrow \infty$. Consequently in this limit, all edges have the same local QAOA neighborhood, and the lowest expected energy density and corresponding optimal QAOA angles can be evaluated by analyzing these tree structures rather than the full graph. We therefore fix the QAOA angles in this way throughout the manuscript; the angles for both problems can be found in Ref.~\cite{Wybo2025}. The resulting tree structure is also well suited to tensor-network methods, since tensor networks on trees can be contracted efficiently~\cite{Shi2008}.

\section{Quantum-informed surrogate sampling} \label{sec:factor_distributions}

\begin{figure*}
    \centering
    \includegraphics[width=\textwidth]{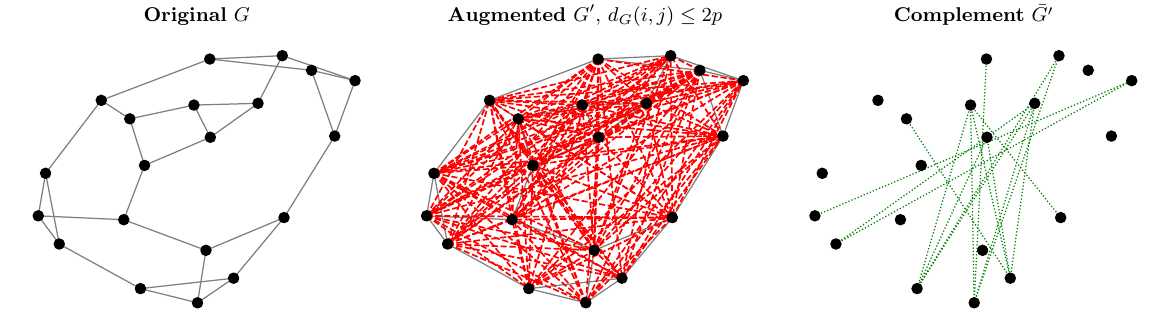}
    \caption{Starting from an original graph $G$, we construct an augmented graph $G'$. There is an edge in the augemented graph if $d_G(i,j) \leq 2p$, in this figure we consider $p=2$. We also show the complement graph of $G'$ to highlight the distinction from the complete graph. }
    \label{fig:sketch}
\end{figure*}

The purpose of Quantum-Informed Surrogate Sampling (QISS) is to construct candidate solutions $\bm{z} \in \{-1,+1\}^N$ to the combinatorial optimization problems introduced in the previous section~\ref{ssec:models} from low-weight quantum correlators, in particular those measured on a QAOA state~\eqref{eq:qaoa}. We focus on expectation values of Pauli-$Z$ operators supported on a small subset of qubits $S_\alpha \subseteq \{1,\dots,N\}$ with $|S_\alpha|\in O(1)$,
\begin{equation}
    \mu_{S_\alpha}
    =
    \ev{\prod_{i \in S_\alpha} Z_i}{\psi_p(\bm{\gamma}^\star,\bm{\beta}^\star)}
    \in [-1,+1].
\end{equation}

Using these mean values, we define a factorized probability distribution over the domain
$\bm{z} \in \{-1,+1\}^N$ and the collection $S$ over \textit{factor supports} $S_{\alpha}$ as
\begin{equation}
    \label{eq:factors}
    P(\bm{z})
    =
    \frac{1}{\mathcal{Z}}
    \prod_{S_\alpha \in S}
    \bigl( 1 + \mu_{S_\alpha} \, \chi_{S_\alpha}(\bm{z}) \bigr),
\end{equation}
where $\chi_{S_\alpha}(\bm{z}) = \prod_{i \in S_\alpha} z_i$ denotes the parity (character) function associated with the subset $S_\alpha$, and the normalization constant, or partition function, is given by
\begin{equation}
    \mathcal{Z}
    =
    \sum_{\bm{z}}
    \prod_{S_\alpha \in S}
    \bigl( 1 + \mu_{S_\alpha} \, \chi_{S_\alpha}(\bm{z}) \bigr).
\end{equation}
Computing $\mathcal{Z}$ exactly is generally intractable, as it involves a sum over exponentially many configurations. Nevertheless, this factor distribution has the advantage that efficient Markov-chain Monte Carlo (MCMC) sampling schemes can be constructed, as discussed below.

We first note that the distribution in Eq.~\eqref{eq:factors} is equivalent to a Gibbs distribution for an effective Ising Hamiltonian $H'$,
\begin{equation} \label{eq:gibbs_factors}
    P(\bm{z})
    =
    \frac{1}{\tilde{\mathcal{Z}}_T}
    \exp\!\left(
        -\frac{1}{T}
        \sum_{S_\alpha \in S}
        J_{S_\alpha} \, \chi_{S_\alpha}(\bm{z})
    \right)
    =
    \frac{1}{\tilde{\mathcal{Z}}_T}
    \exp\!\left( -\frac{1}{T} H'(\bm{z}) \right),
\end{equation}
with temperature $T=1$ and couplings
\begin{equation}
    J_{S_\alpha}
    =
    -\operatorname{arctanh}(\mu_{S_\alpha})
    =
    -\frac{1}{2}
    \ln\!\left(
        \frac{1+\mu_{S_\alpha}}{1-\mu_{S_\alpha}}
    \right).
\end{equation}
Thus, the couplings in $H'$ are directly determined by the measured QAOA expectation values. Importantly, this effective model lives on a graphical structure that is defined by the collection of factor supports $S$, which does not need to correspond to the original problem graph.

The number of terms in $H'$ is given by $|S|$. In this work, we assume quadratic Hamiltonians and will consider two natural choices for $S$. The first one simply corresponds to the structure of the original problem graph $G=(V,E)$, i.e.
\begin{equation}
    S=E\cup V.
\end{equation}

The second choice augments the graph $G(E,V)$ to $G'(E',V)$ according to the light-cone structure of the QAOA circuit. We consider $E'=\{(i,j)\mid d_G(i,j)\le 2p\}$ with $d_G(i,j)$ the graph distance, or the shortest-path length, between vertices $i$ and $j$ in $G$; $p$ is the QAOA depth. This is exactly the set of pairs whose reverse light cones overlap and whose correlator therefore does not factorize into one-body terms
\begin{equation} \label{eq:S_non_edge}
    S'=E' \cup V.
\end{equation}

These two structures can differ substantially depending on $p$. For example, if $G$ is a random regular graph and thus locally tree-like, the augmented graph $G'$ induced by $E'$ is far from locally tree-like: by design the inclusion of all edges within distance $2p$ creates dense local clusters and short cycles. This is illustrated in Fig.~\ref{fig:sketch} where we start from a 3-regular graph on with 20 nodes and augment the structure with edges representing nontrivial correlators at $p=2$.

If $ d_G(i,j) \le 2p $ for all pairs, then $G'\equiv K_N$, the complete graph on $N$ nodes, and the full correlation matrix
\begin{equation}
    C_{ij}
    =
    \ev{Z_i Z_j}{\psi_p(\bm{\gamma^{\star}},\bm{\beta^{\star}})}
\end{equation}
is required.
For a 3-regular graph, this regime is reached when
\begin{equation}
    p \approx \log_2(N/6)
    \qquad
    \Rightarrow N \approx 6 \cdot 2^p.
\end{equation}
Since the light-cone neighborhood grows only as $O(2^p)$, its size is set by the depth rather than by $N$. The correlators $C_{ij}$ can therefore be computed exactly irrespective of $N$, in practice up to $p\approx 5$ with classical tensor-network techniques~\cite{Gray2018quimb}.

\subsection{Sampling from the factor distribution}
\label{sec:factor_sampling}

We sample from the factor distribution~\eqref{eq:factors} by constructing a Markov chain based on single-site conditional updates. The key observation is that, although the full probability distribution in Eq.~\eqref{eq:factors} involves the partition function $\mathcal{Z}$, the conditional probabilities required for the Markov chain do not, and can thus be computed efficiently.

Consider a configuration of all variables except $i$, denoted by $\bm z_{-i}$. The conditional probability that $z_i$ takes a given value depends exclusively on the subset of factors that involve $i$. We denote this subcollection by ${T_i = \{ S_\alpha \in S \mid i \in S_\alpha \}}$. Defining the unnormalized weights as
\begin{equation}
    w(z_i)
    =
    \prod_{S_\alpha \in T_i}
    \bigl(1 + \mu_{S_\alpha} \, \chi_{S_\alpha}(\bm z)\bigr),
\end{equation}
the conditional probability takes the form
\begin{equation}
    P(z_i= + 1 \mid \bm z_{- i})
    =
    \frac{w(z_i=+1)}{w(z_i=+1)+w(z_i=-1)}.
\end{equation}
Since $\chi_{S_\alpha}(\bm z)=z_i\prod_{j\in S_\alpha\setminus\{i\}}z_j$, the weights
can be evaluated explicitly as
\begin{align}
    w(z_i=+1)
    &=
    \prod_{S_\alpha \in T_i}
    \Bigl(1+\mu_{S_\alpha}\!\!\prod_{\substack{j\in S_\alpha\\ j\neq i}}\!z_j\Bigr),\\
    w(z_i=-1)
    &=
    \prod_{S_\alpha \in T_i}
    \Bigl(1-\mu_{S_\alpha}\!\!\prod_{\substack{j\in S_\alpha\\ j\neq i}}\!z_j\Bigr).
\end{align}
The update probability can be expressed in compact form,
\begin{equation}
    P(z_i=+1 \mid \bm z_{- i})
    =
    \sigma\!\left(
        \log w(z_i=+1) - \log w(z_i=-1)
    \right),
    \qquad
    \sigma(x)=\frac{1}{1+e^{-x}}.
\end{equation}

Then, starting from an initial configuration $\bm z \in \{-1,+1\}^N$, one sweep of the Markov chain consists of updating each variable once in random order. We perform a sweep as follows, for each site $i=1,\dots,N$: (i) Compute the conditional probability $ p_i = P(z_i=+1 \mid \bm z_{-i})$ induced by the factor distribution $P(\bm z)$ in Eq.~\eqref{eq:factors}; (ii) Draw $r_i \sim \mathrm{Bernoulli}(p_i)$; (iii) Set $z_i = +1$ if $r_i=1$ and $z_i=-1$ otherwise; (iv) Update all factors $\chi_{S_\alpha}$ with $S_\alpha \in T_i$ that depend on $z_i$.

This procedure defines a single-site MCMC sampler. By construction, the transition kernel satisfies detailed balance with respect to $P(\bm z)$ (see Appendix~\ref{app:MCMC_proofs}), and the normalization constant of $P$ is never required, as only local conditional probabilities are evaluated. Since the state space is finite and all configurations are reachable through successive single-spin updates, the chain is irreducible and aperiodic (see Appendix~\ref{app:MCMC_proofs}). Hence $P(\bm z)$ is the unique stationary distribution of the Markov chain, and the sampling cost per sweep is $\Theta(|S|)$.

In our simulations, the Markov chain is initialized either from a uniformly random configuration or from a configuration aligned with the signs of the measured single-site observables, $z_i=\mathrm{sign}(\ev{Z_i}{\bm{\gamma^{\star}},\bm{\beta^{\star}}})$, when available. We perform $300$ burn-in sweeps to allow convergence toward stationarity. Subsequently, $5000$ additional sweeps are carried out, and every tenth configuration is retained in order to reduce autocorrelations. This results in approximately $500$ effectively independent samples from $P(\bm z)$ for our simulations.

\subsection{Mean values of the factor distribution} \label{ssec:factor-means}

We now analyze the expectation values $\ev{\chi_T}_P$ with respect to the factor distribution~\eqref{eq:factors}, which are needed to estimate the average performance of the surrogate sampling. In general, these expectation values have no closed form in terms of $\mu_{S_\alpha}$ without additional assumptions on the graphical structure of the factor support collection $S$. To make this explicit, we expand the product over factors. Let $A\subseteq S$ denote a subcollection of factor supports, and define its \emph{parity support} as
\begin{equation}
    \Pi(A)
    =
    \bigoplus_{S_\alpha\in A} S_\alpha ,
\end{equation}
where $\bigoplus$ denotes the symmetric difference of sets, such that $\Pi(A)$ contains precisely those indices that appear in an odd number of the sets contained in $A$. We then have
\begin{equation}
    W(\bm{z})
    =
    \prod_{S_\alpha\in S}
    \bigl(1+\mu_{S_\alpha}\chi_{S_\alpha}(\bm z)\bigr)
    =
    \sum_{A\subseteq S}
    \Bigl(
        \prod_{S_\alpha\in A}\mu_{S_\alpha}
    \Bigr)
    \chi_{\Pi(A)}(\bm z).
\end{equation}
Using the orthogonality of the parity characters
\begin{equation}
    \frac{1}{2^N}
    \sum_{\bm z}
    \chi_{T_1}(\bm z)\chi_{T_2}(\bm z)
    =
    \delta_{T_1,T_2},
\end{equation}
the expectation value of $\chi_T$ can be written as
\begin{equation} \label{eq:ev_chi}
    \ev{\chi_T}_P
    =
    \frac{ \sum_{\bm z} \chi_T(\bm{z}) W(\bm{z}) }{\sum_{\bm z}  W(\bm{z})}
    =
    \frac{
        \displaystyle
        \sum_{\substack{A\subseteq S\\ \Pi(A)=T}}
        \prod_{S_\alpha\in A}\mu_{S_\alpha}
    }{
        \displaystyle
        \sum_{\substack{A\subseteq S\\ \Pi(A)=\emptyset}}
        \prod_{S_\alpha\in A}\mu_{S_\alpha}
    } .
\end{equation}
This expression shows why closed-form evaluation is difficult in general. The numerator receives contributions from all subsets of factors whose supports combine to $T$ under symmetric difference, while the denominator contains all combinations of subsets whose supports cancel to the empty set. The number of such subsets can grow exponentially with the number of factors.

The exact identity also elucidates the perturbative weak coupling regime. If $T=S_\alpha\in S$, then the single-factor subset $A=\{S_\alpha\}$ contributes $\mu_{S_\alpha}$. All other contributions arise from products of two or more factor strengths whose supports combine to the same parity, i.e.
\begin{equation}
    \ev{\chi_{S_\alpha}}_P
    =
    \mu_{S_\alpha}
    +
    \text{higher-order parity corrections}.
\end{equation}
This approximation is controlled only when these higher-order corrections are small.

A simple exact case is obtained when the factor collection contains only single-variable factors. Then
\begin{equation}
    P(\bm z)
    \propto
    \prod_i (1+\mu_i z_i),
\end{equation}
so the variables are independent and $\ev{z_i}_P=\mu_i$. 
In particular, if $\mu_i=1$, the variable is frozen to $z_i=+1$, while if $\mu_i=-1$, it is frozen to $z_i=-1$.

Exact computation is also possible when the graphical structure underlying the factor distribution is sufficiently simple. The graphical structure can be represented by a \emph{factor graph}: a bipartite graph with one node per variable and one node per factor, where an edge joins variable $i$ to factor $\chi_{S_{\alpha}}$ whenever $i\in S_{\alpha}$, and never two variables or two factors.
If such a factor graph induced by the variables and factors corresponding to $S$ is a tree, belief-propagation can compute the partition function and all marginals exactly, in a single pass with a cost that is linear in the number of variables~\cite{Pearl1988}. This is directly relevant to the case where $S= E \cup V $considered in this work. When every factor couples exactly two variables, the factor graph reduces (up to the factor nodes on each edge) to the ordinary interaction graph. If that graph is a random regular graph, it is locally tree-like, so any neighborhood of bounded radius is acyclic with high probability. Belief propagation is therefore not fully exact, but it is asymptotically exact for local quantities as $N \to \infty$.

In this work, we estimate the required mean values numerically by sampling from the corresponding Gibbs distribution using the MCMC method described above. To ensure statistically reliable estimates, it is necessary to account for temporal correlations in the Markov chain. Given a time series $O_1,\dots,O_n$ of measured observables, we estimate the autocovariance function using the convolution theorem,
\begin{equation}
\widehat{C}_t
=
\mathcal{F}^{-1}\!\bigl(|\mathcal{F}(O-\bar O)|^2\bigr)_t,
\qquad
t=0,\dots,n-1,
\end{equation}
where $\bar O$ denotes the sample mean and $\mathcal{F}$ the discrete Fourier transform. Normalizing by $\widehat{C}_0$ yields the autocorrelation function $\widehat{\rho}_t = \mathbb{E}[(O_k - \bar{O})(O_{k+t} - \bar{O})]/\sigma^2$ with $\sigma^2 = \mathbb{E}[(O_t-\bar{O})^2]$. The integrated autocorrelation time is estimated using the initial positive sequence rule~\cite{Geyer1992},
\begin{equation}
\widehat{\tau}_{\mathrm{int}}
=
\frac{1}{2}
+
\sum_{k\ge 1}
\bigl(\widehat{\rho}_{2k-1}+\widehat{\rho}_{2k}\bigr),
\end{equation}
with the sum truncated at the first $k$ for which
$\widehat{\rho}_{2k-1}+\widehat{\rho}_{2k}\le 0$. This yields a robust estimate of
the effective number of independent samples,
\begin{equation}
\widehat{n}
\approx
\frac{n}{2\,\widehat{\tau}_{\mathrm{int}}},
\end{equation}
which we use to quantify statistical uncertainty in all reported averages, i.e. $\mathrm{Var}(\bar{O})= \sigma^2 / \widehat{n}$.

\subsection{Factor distribution representation} 
In this section, we take the reverse point of view. Starting from the QAOA output distribution 
\(Q(\bm{z})=|\braket{\bm{z}}{\psi(\bm{\gamma}^\star,\bm{\beta}^\star)}|^2\), or more generally from any distribution over bitstrings $Q(\bm{z})$, we identify the approximations that lead to the factor distribution used in our surrogate model.
This can be seen by writing out the moments as
\begin{equation}
    \mu_{S_\alpha}  = \sum_{\bm{z}} Q(\bm{z}) \chi_{S_\alpha}(\bm{z}),
\end{equation}
and applying the inverse Hadamard-Walsh transform
\begin{equation}
   Q(\bm{z}) = \frac{1}{2^N} \sum_{S_\alpha \in S}  \mu_{S_\alpha}  \chi_{S_\alpha}(\bm{z}).
\end{equation}
Here, the sum runs over all $2^N=|S|$ subsets \(S_\alpha \subseteq [N]\). Equivalently, each subset $S_\alpha$ can be identified with a string \(\bm{s}_\alpha\in\{-1,+1\}^N\), where $s_{\alpha i}=+1$ iff \(i\in S_\alpha\).
Considering the logarithm of the distribution, assuming $Q(\bm{z})$ has full support,
\begin{equation}
   \log(Q(\bm{z})) =  \sum_{S_\alpha}  L_{S_\alpha}  \chi_{S_\alpha}(\bm{z}),
\end{equation}
with coefficients $L_{S_\alpha} = \frac{1}{2^N} \sum_{\bm{z}}   \log(Q(\bm{z})) \chi_{S_\alpha}(\bm{z})$, implies $Q(\bm{z}) = \exp(\sum_{S_\alpha} L_{S_\alpha}  \chi_{S_\alpha}(\bm{z}))$. By the nature of the characters $\chi_{S_\alpha} \in\{-1,+1\}$ we can write 
\begin{equation}
    \exp(L_{S_\alpha}  \chi_{S_\alpha}(\bm{z})) = \cosh(L_{S_\alpha}) +\chi_{S_\alpha}(\bm{z}) \sinh(L_{S_\alpha}) = \cosh L_{S_\alpha} \left[ 1 + \chi_{S_\alpha}(\bm{z}) \tanh(L_{S_\alpha}) \right].
\end{equation}
Hence, we can recover the factor form of $Q(\bm{z})$ 
\begin{equation}
   Q(\bm{z}) \propto \prod_{S_\alpha} (1 + \tanh(L_{S_\alpha}) \chi_{S_\alpha}(\bm{z}) ),
\end{equation}
where we dropped the normalization. 
From this it can be seen that our surrogate model~\eqref{eq:factors} is obtained by making the following (strong) approximations: (i) Restricting $|S|$ such that it only contains (a polynomially large subset of) low-weight observables; (ii) Taking $\tanh(L_{S_\alpha}) \approx \mu_{S_\alpha}$.

\subsection{Consequences of MCMC sampling and heuristic maximum-entropy structure of the surrogate distribution}

\begin{figure}
    \centering
    \includegraphics[width=\textwidth]{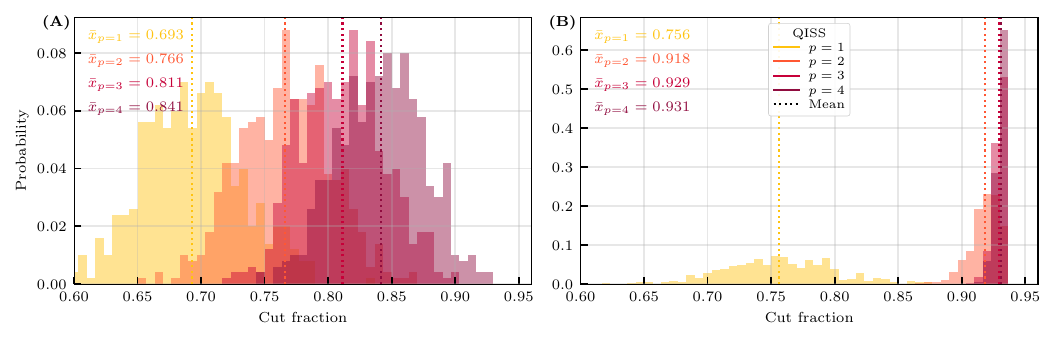}
    \caption{\textbf{Cut fractions obtained with QISS} for a single MaxCut instance on a 3-regular graph of size $N=100$. The histograms show the cut fractions corresponding to 500 samples generated via QISS based on the mean value input. (A) Only the QAOA edge correlators $E$ are included in the sampling model. (B) All non-zero QAOA edge correlators $E'$ are included in the sampling model. For this particular instance, the optimal cut fraction is $14/15\; (= 0.933...$). }
    \label{fig:maxcut_histograms}
\end{figure}

In this section, we interpret the surrogate factor model through its relation to maximum-entropy modeling, and explain in what sense it departs from it. We consider first an idealized limit in which the surrogate concentrates on optimal configurations, then the realistic setting in which the local mapping $J_{S_\alpha}=\operatorname{arctanh}(\mu_{S_\alpha})$ replaces exact moment matching, and finally the practical consequences for MCMC sampling and noise robustness.

Consider first an idealized limit in which the moments defining the factor distribution are those of the optimal-solution distribution itself. This arises, for instance, when QAOA outputs a uniform superposition $Q_\Omega$ over the optimal set $\Omega$ (possibly only in the $p\to\infty$ limit), so that $\mu_S = \ev{\chi_S}_{Q_\Omega}$. A moment reaches $\mu_S = \pm 1$ precisely when the parity $\chi_S$ takes the same value on every optimal configuration. Its factor $1+\mu_S\chi_S(\bm z)$ then acts as a hard constraint, vanishing on all configurations with $\chi_S(\bm z) = -\mu_S$, while the remaining moments ($|\mu_S|<1$) only reweight configurations without excluding any. If the deterministic moments collectively characterize $\Omega$, the resulting distribution is supported exactly on the optimal set. Sampling then has a one-sided guarantee: starting from any optimal configuration, every configuration the chain can reach is itself optimal, so the surrogate can never turn an optimal input into a suboptimal one. This does not imply that the chain mixes efficiently. If different optima are separated by configurations of zero probability, then local MCMC updates can become frozen. The chain then simply remains at its starting optimum: it fails to explore the rest of $\Omega$, but it never worsens the solution it was given. 

Away from this idealized limit, the surrogate defines a moment-informed heuristic Hamiltonian $H'$ whose couplings follow from the measured quantum correlators through the local analytic mapping $J_{S_\alpha}=\operatorname{arctanh}(\mu_{S_\alpha})$, see Eq.~\eqref{eq:gibbs_factors}.

Because each coupling $J_{S_\alpha}$ is fixed from its own moment alone, via the relation that is only exact for a single factor, the induced Gibbs distribution does not in general reproduce its own defining moments, $\ev{\chi_{S_\alpha}}_P \neq \mu_{S_\alpha}$. Equality could only hold in the special cases identified in Sec.~\ref{ssec:factor-means} under Eq.~\eqref{eq:ev_chi}: a single factor, the weak-coupling regime or a tree-structured factor graph. The construction should thus be read as a structured classical surrogate whose low-order parameters are set by the quantum correlations, while its higher-order correlations are fixed implicitly by the induced interaction structure $G'$.

The exponential form
\begin{equation}
Q(\bm z)
\propto
\exp \left(
\sum_{S_\alpha\in S}
J_{S_\alpha}\chi_{S_\alpha}(\bm z)
\right)
\end{equation}
nonetheless carries an information-theoretic reading. If the couplings were chosen to enforce the moment constraints $\sum_{\bm z} Q(\bm z)\chi_{S_\alpha}(\bm z) = \mu_{S_\alpha}$ for all $S_\alpha \in S$, then $Q$ would be the unique distribution maximizing the Shannon entropy $-\sum_{\bm z} Q(\bm z)\log Q(\bm z)$ subject to those constraints. Such a model would, however, inherit the QAOA correlations it was built from rather than surpass them. The surrogate instead forgoes the global moment-matching problem and adopts the explicit local approximation $J_{S_\alpha} = \operatorname{arctanh}(\mu_{S_\alpha})$, which is exact for an isolated factor but only approximate for overlapping ones. The result is a maximum-entropy-\emph{inspired} model that incorporates the available quantum information without introducing additional tunable parameters or requiring a global convex optimization.

From an optimization perspective, sampling the surrogate at fixed temperature $T=1$ biases the candidate solutions toward low-energy configurations of the effective Hamiltonian $H'$, while still producing a range of configurations. Since the couplings of $H'$ are set by the quantum correlations, these configurations reflect the structure encoded in the QAOA state; the empirical finding of Sec.~\ref{sec:results} is that they also tend to be good solutions of the original optimization problem. Finally, in terms of stability, the smooth monotonic dependence $J_{S_\alpha} = \operatorname{arctanh}(\mu_{S_\alpha})$ ensures robustness with respect to noise in the estimated quantum correlations, since small perturbations in $\mu_{S_\alpha}$ induce continuous deformations of the surrogate distribution rather than abrupt structural changes in $G'$.

\section{Results} \label{sec:results}
In this section, we benchmark QISS on MaxCut and MIS, both on random 3-regular graphs, using correlators from QAOA, from RWS-QAOA, and from the 54-qubit IQM Emerald QPU. Throughout we use the fixed tree-optimal angles of Ref.~\cite{Wybo2025} for vanilla QAOA, making the procedure entirely training-free.

\begin{figure}
    \centering
    \includegraphics[width=\textwidth]{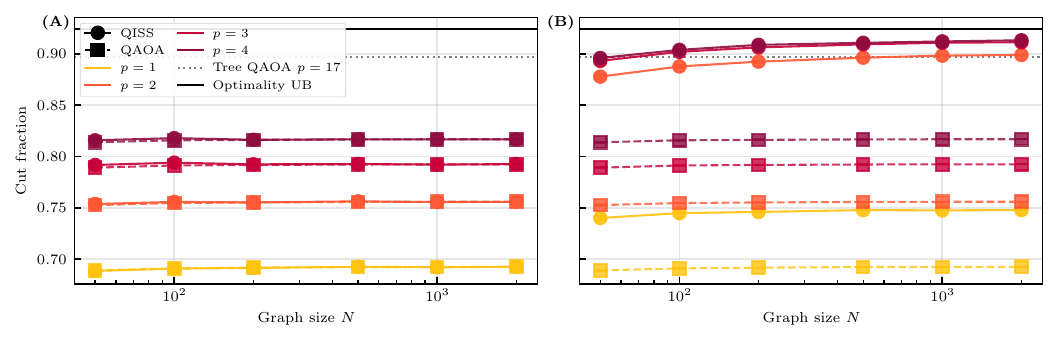}
    \caption{\textbf{QISS for MaxCut on 3-regular graphs based on QAOA correlations.} Average cut fractions from vanilla QAOA and from QISS, for various QAOA depths $p$, averaged over 40 instances per system size $N$. (A) Only the edge correlators are included in the surrogate model; QISS reproduces the QAOA cut fraction. (B) All non-zero correlators within the light cone ($d_G(i,j)\le 2p$) are included, giving a large improvement over vanilla QAOA. The dotted line marks vanilla tree QAOA at $p=17$~\cite{Farhi2025}.}
    \label{fig:maxcut_means}
\end{figure}

\subsection{MaxCut}

We first present results for MaxCut on 3-regular graphs. We sample a problem-instance set $\{G=(V,E)\}$ of 40 randomly chosen 3-regular graphs for each system size $N=|V|=50,100,200,500,1000,2000$. As noted in Sec.~\ref{ssec:models}, these graphs are locally tree-like, so the depth-$p$ correlations have bounded locality and can be computed exactly for small $p$, at a cost scaling exponentially in $p$. We evaluate them by exact tensor-network contraction~\cite{Gray2018quimb} using the tree-optimal angles $(\bm\gamma^\star,\bm\beta^\star)$ of Ref.~\cite{Wybo2025} for vanilla QAOA.

\subsubsection{QAOA correlations}

We first apply QISS to correlators from vanilla QAOA. Since the MaxCut Hamiltonian is $\mathbb{Z}_2$-symmetric, all one-body expectation values vanish, so it suffices to compute the two-body correlators $\ev{Z_i Z_j}{\bm\gamma^\star,\bm\beta^\star}$ for pairs with $d_G(i,j)\le 2p$. Based on these, we apply QISS (see Sec.~\ref{sec:factor_sampling}) to each instance, using either the edge set $S=E$ or the enlarged set $S'=E'=\{(i,j)\mid d_G(i,j)\le 2p\}$. For each sample $\bm z$, we evaluate the cut fraction $-H_{\mathrm{MAXCUT}}(\bm z)/|E|$, the normalized cost of Eq.~\eqref{eq:H_MC}. Fig.~\ref{fig:maxcut_histograms} shows histograms of these cut fractions for a single $N=100$ instance, illustrating a large shift towards near-optimal solutions by applying QISS on the factor collection $S'$.



Fig.~\ref{fig:maxcut_means} compares the average cut fractions obtained with QISS and vanilla QAOA. When only the edge correlations are included in the factor distribution, the sampler does not improve on the QAOA on average, as can be seen from Fig.~\ref{fig:maxcut_means}{(A)}. This follows from the locally tree-like structure of the factor graph, which is inherited directly from the problem graph. Hence, in this case the prescribed moments, and thus the cut fractions, are reproduced, see also Sec.~\ref{ssec:factor-means}. However, if we input the correlations corresponding to the augmented structure in Fig.~\ref{fig:sketch}, sampling from the surrogate generates solutions that substantially outperform vanilla QAOA, even when based on shallow correlators at $p\approx 2,3$. This is shown in Fig.~\ref{fig:maxcut_means}{(B)}: our method exceeds the average cut fraction of vanilla $p=17$ tree QAOA (dotted line)~\cite{Farhi2025}, the largest depth for which the tree-optimal angles have been computed to our knowledge. 

In Appendix~\ref{app:thermal}, we examine a variant of QISS in which the input correlations are drawn from a thermal state rather than from QAOA, allowing for a direct comparison between the two and indicating that QAOA correlations lead to better results.

\subsubsection{RWS-QAOA correlations}

\begin{figure}
    \centering
    \includegraphics{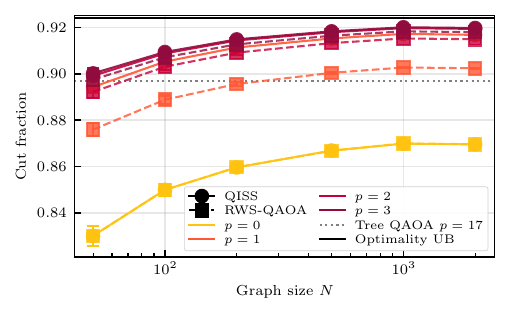}
    \caption{\textbf{QISS for MaxCut on 3-regular graphs based on RWS-QAOA correlations.} Average cut fractions obtained from the RWS-QAOA and combined with QISS postprocessing at different depths $p$. Here, $p=0$ corresponds to the expectation values estimated with respect to the initial RWS-QAOA state; the QISS postprocessing is unable to improve. For reference, the $p=17$ mean cut fraction from vanilla QAOA is shown~\cite{Farhi2025}. QISS improves upon the supplied RWS-QAOA correlations, with the relative improvement decreasing as we approach the optimality upper bound with increasing $p$.}
    \label{fig:maxcut_ws_means}
\end{figure}

Following Ref.~\cite{he2026regularizedwarmstartedquantumapproximate}, we can use \emph{regularized warm-start QAOA} (RWS-QAOA) to generate correlations that can be supplied to QISS. In RWS-QAOA, a linear-time classical preprocessing step is performed that introduces biases into the initial QAOA state. Instead of initializing each qubit as $\ket{+}$, the state $R_y(\theta_i)\ket{+}$ is prepared on each qubit, where $\theta_i$ is given by the classical preprocessing routine. Additionally, the mixing Hamiltonian is modified relative to vanilla QAOA: instead of the simple $\sum_i X_i$, the mixer is chosen to be $H_M^{\mathrm{RWS}}(\boldsymbol{\theta})=\sum_i \sin(\theta_i) X_i + \cos(\theta_i) Z_i$. The angles $\theta_i$ are determined by minimizing the following regularized relaxation of the MaxCut cost function~\cite{he2026regularizedwarmstartedquantumapproximate}
\begin{equation}
\label{eq:regularized-objective-rwsqaoa}
\boldsymbol{\theta}^{\star}
=
\operatorname*{arg\,min}_{\boldsymbol{\theta}\in[0,\pi]^N}
\left[
\frac{1}{2}
\sum_{(i,j)\in E}
\cos(\theta_i)\cos(\theta_j)
-
\eta
\sum_{i=1}^{N}
\sin^2(\theta_i)
\right],
\end{equation}
where $\eta=0.6$ is the strength of the regularization term penalizing near-bitstring states, empirically determined in Ref.~\cite{he2026regularizedwarmstartedquantumapproximate}. In this work we determine $\boldsymbol{\theta}^*$ by minimizing the relaxed objective Eq.~\eqref{eq:regularized-objective-rwsqaoa} using at most 1000 iterations of the gradient-based L-BFGS-B~\cite{Liu1989} optimizer with 1000 random initializations, to mitigate the nonconvex nature of the relaxed objective. As before, we then compute the required expectation values using exact tensor network contraction~\cite{Gray2018quimb} for depths $p\in\left\{1,2,3\right\}$. We use the RWS-QAOA parameters $\boldsymbol{\gamma},\boldsymbol{\beta}$ as provided by Ref.~\cite{he2026regularizedwarmstartedquantumapproximate}.

In Fig.~\ref{fig:maxcut_ws_means} we show the cut fractions obtained from tensor-network simulations of RWS-QAOA and after post-processing with QISS. At $p=0$ we supply only the one-body moments $\langle Z_i\rangle$ of the initial RWS-QAOA state (i.e. before any phase-separator or mixer unitary is applied), so the surrogate factorizes into independent single-site distributions with marginals $\langle Z_i\rangle$, the exact case of Sec.~\ref{ssec:factor-means}. Sampling then reproduces the mean-field cut, which coincides with the bare RWS-QAOA estimate at $p=0$, hence the $p=0$ markers in Fig.~\ref{fig:maxcut_ws_means} overlap. Once QAOA evolution generates genuine two-body correlations ($p\ge1$), QISS consistently improves upon the RWS-QAOA cut fractions across all system sizes. Notably, already at $p=1$ the QISS cut fractions exceed the vanilla tree-QAOA value at $p=17$ (dotted line)~\cite{Farhi2025}. The relative improvement over RWS-QAOA is largest at low depth and shrinks with increasing $p$. This is expected, since there is little room left to improve once RWS-QAOA is already close to optimal.

\subsubsection{RWS-QAOA QPU correlations}
In this section we describe the implementation of RWS-QAOA on the 54-qubit IQM Emerald QPU with a square-grid connectivity (see Fig.~\ref{fig:idea_sketch}{(A)}). We leveraged the fact that, upon solving the relaxed objective to determine the warm-start angles $\theta_i$, many of them take the extremal values $\theta_i\in\{0,\pi\}$. For $\theta_i=0$ ($\theta_i=\pi$) qubit $i$ is initialized in the computational basis state $\ket{0}$ ($\ket{1}$), and the associated mixer term reduces to $\pm Z_i$. Then, both the mixer and the phase separator act as diagonal unitaries, leaving these qubits frozen in their initial configurations throughout the circuit. We can therefore eliminate them from the cost function by substituting their frozen values, contributing a local field to each neighboring node. This leaves us with a reduced subgraph (see Fig.~\ref{fig:var-freezing}{(A)}), typically much smaller than the original graph $G$, and often split into several connected components. Only the reduced subgraph needs to be implemented on the QPU, with each connected component treated as a separate circuit. Therefore, the original problem size may exceed the number of available qubits (see Fig.~\ref{fig:emerald_maxcut}).

\begin{figure}
    \centering
    \includegraphics{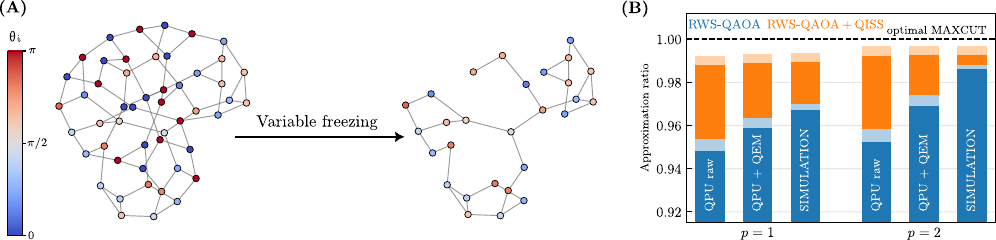}
    \caption{\textbf{Variable freezing and device results for RWS-QAOA.} (A) The initial angles $\theta_i$ obtained by minimizing Eq.~\eqref{eq:regularized-objective-rwsqaoa}. The variables with extremal angles $\theta_i=0$ or $\theta_i=\pi$ are frozen in the corresponding classical configurations, yielding a simplified problem. (B) Device results on the IQM Emerald QPU, averaged over 10 MaxCut instances on 3-regular graphs with $N=50$ nodes, for RWS-QAOA at depths $p=1,2$. Blue bars show the approximation ratio obtained directly from the RWS-QAOA correlators; the orange segment stacked on top shows the additional improvement from QISS. For each depth we consider three inputs (left to right): the raw QPU correlators, the error-mitigated (QEM) correlators, and noiseless simulation. The dashed line marks the optimal MaxCut. Shaded regions denote the standard error of the mean.}
    \label{fig:var-freezing}
\end{figure}

Figures~\ref{fig:var-freezing}{(B)} and~\ref{fig:emerald_maxcut} show the results obtained on the IQM Emerald QPU. At each system size we generated 10 random problem instances, computed the warm-start angles, and simplified the instances by freezing all variables with $\abs{\cos\theta_i}\geq 0.999$. Each connected component is run as a separate circuit, from which we collect $2000$ shots. To mitigate QPU noise we combine Pauli twirling (PT)~\cite{wallman2016,hashim2020} (32 twirls) with Zero Noise Extrapolation (ZNE)~\cite{Temme2017,Li2017} by linearly extrapolating from the twirled expectation values at noise levels $\lambda\in\{1,3\}$, implemented via gate folding.

\begin{figure}
    \centering
    \includegraphics[width=0.99\textwidth]{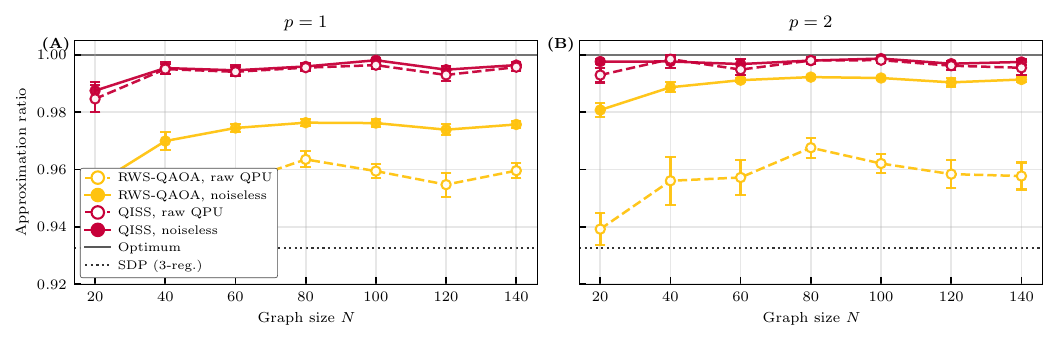}
    \caption{\textbf{QISS for MaxCut on 3-regular graphs on the  54-qubit IQM Emerald QPU.} Approximation ratios over 10 graph instances (and over QISS samples), for RWS-QAOA at $p=1$ (A) and $p=2$ (B). We compare the raw QPU correlators (open markers, dashed) with noiseless simulation (filled, solid), each with (red) and without (yellow) QISS post-processing. QISS lifts the approximation ratio to just below the optimum (solid line) and well above the SDP guarantee for $3$-regular graphs (dotted line), and its output is essentially insensitive to whether the correlators are noisy or noiseless. Error bars denote the standard error of the mean.}
    \label{fig:emerald_maxcut}
\end{figure}

In Fig.~\ref{fig:var-freezing}{(B)} we show the approximation ratios averaged over ten $3$-regular graphs with $N=50$ nodes, comparing the raw QPU correlators, the error-mitigated correlators, and the noiseless simulation, each with and without QISS post-processing. At the level of the bare RWS-QAOA estimator the raw QPU approximation ratio improves slightly with depth, and this improvement becomes more pronounced once quantum error mitigation (QEM) is applied. Notably, the $p=2$ error-mitigated correlators outperform the noiseless $p=1$ correlators, recovering the expected improvement with QAOA depth that the raw device data alone does not exhibit.

QISS consistently improves upon the approximation ratio of the correlators it is built from, lifting the solution quality to just below the optimum. Strikingly, the QISS output is essentially insensitive to device noise: the approximation ratios obtained from the raw QPU, the error-mitigated, and the noiseless correlators are nearly indistinguishable. The raw noisy correlators already carry enough structure to reach the same near-optimal cuts, making both the device noise and the QEM used to counter it largely irrelevant after post-processing.

The same behavior persists as the system size grows, as shown in Fig.~\ref{fig:emerald_maxcut}: across $N$, QISS consistently improves upon the RWS-QAOA correlators, drives the approximation ratio to just below the optimum and well above the SDP guarantee (dotted line)~\cite{Halperin2004}, and remains essentially insensitive to whether the correlators come from the raw QPU or the noiseless simulation. This holds up to $N=140$, the largest size at which the largest connected component of all ten instances fits on the Emerald QPU. The approximation ratio does not degrade with $N$ because, although the largest connected component grows on average, its simple structure (see Fig.~\ref{fig:var-freezing}{(A)}) keeps the transpiled circuit depth roughly constant.

\subsubsection{Comparison to classical algorithms}

We compare the results of QISS based on mean values from QAOA and from RWS-QAOA to two classical algorithms in Fig.~\ref{fig:comparison_maxcut}: (i) Simulated Annealing (SA) and (ii) a low-rank Burer--Monteiro (BM) implementation of the Goemans--Williamson SDP relaxation for MaxCut. As a simulated-annealing baseline~\cite{Kirkpatrick1983}, we use the classical \texttt{SimulatedAnnealingSampler} from D-Wave's Ocean \texttt{dwave.samplers} package~\cite{DWaveSamplers}. For each instance, we use 500 independent reads initialized from random spin configurations. The inverse-temperature schedule is geometric, with its range set automatically by the sampler from the coupling magnitudes. We use 1000 sweeps per read, with one full sequential Metropolis sweep over all spins per beta value. Note that this implies that in total we do two orders of magnitude more sweeps for SA compared to QISS. However, in both cases the number of sweeps is constant as a function of $N$, while the cost of one read scales as $O(N)$ for 3-regular graphs. We report averages over the 40 instances while keeping the best sample for each graph instance.

We also compare to the Burer--Monteiro rank-two relaxation heuristic~\cite{BurerMonteiroZhang2002}. We use the C\texttt{++} implementation provided by the MQLib library~\cite{Dunning2018} under the name \texttt{BURER2002} which is considered to be a state-of-the art classical solver~\cite{Dunning2018}. For each graph instance the solver is given a wall-clock budget of $2\,\mathrm{s}$ on a single core. Within this budget it performs repeated random restarts of the rank-two relaxation, each refined by a gradient-based optimization, randomized projection (hyperplane) rounding, and $1$-opt local search, and it returns the best cut encountered for every instance. 

Under a fixed compute budget, the purely classical samplers become budget-limited as the instances grow and their performance degrades for larger instances (see Fig.~\ref{fig:comparison_maxcut}). 

Combining RWS-QAOA with QISS post-processing yields a strong results in this comparison. However, the warm start inherits the classical relaxation it is built from, and therefore the performance can degrade with increasing system size as the budget is kept fixed, like the classical solvers. QAOA, by contrast, is a local algorithm whose correlations on bounded-degree graphs are set by size-independent local neighborhoods and are thus expected to stay roughly stable. The correlation signal passed from QAOA to QISS should therefore degrade more slowly than the quality of a Markov chain whose fixed sweep budget must cover a configuration space that grows with size, however the RWS relaxation could set a limit on how far this robustness extends.

 \begin{figure}
    \centering
    \includegraphics[width=0.5\textwidth]{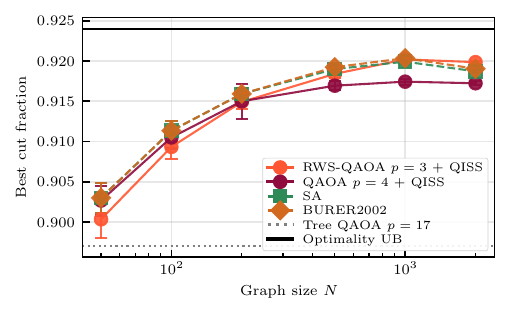}
    \caption{\textbf{QISS for MaxCut on 3-regular graphs compared to classical solvers.} The cut fractions averaged over the same 40 instances per $N$. All algorithms generate multiple candidate solutions during their run time. We keep the best and then average over the different instances, the error bars show the standard error of the mean.}
    \label{fig:comparison_maxcut}
\end{figure}

\subsection{Maximum Independent Set}

\begin{figure}
    \centering
    \includegraphics[width=\textwidth]{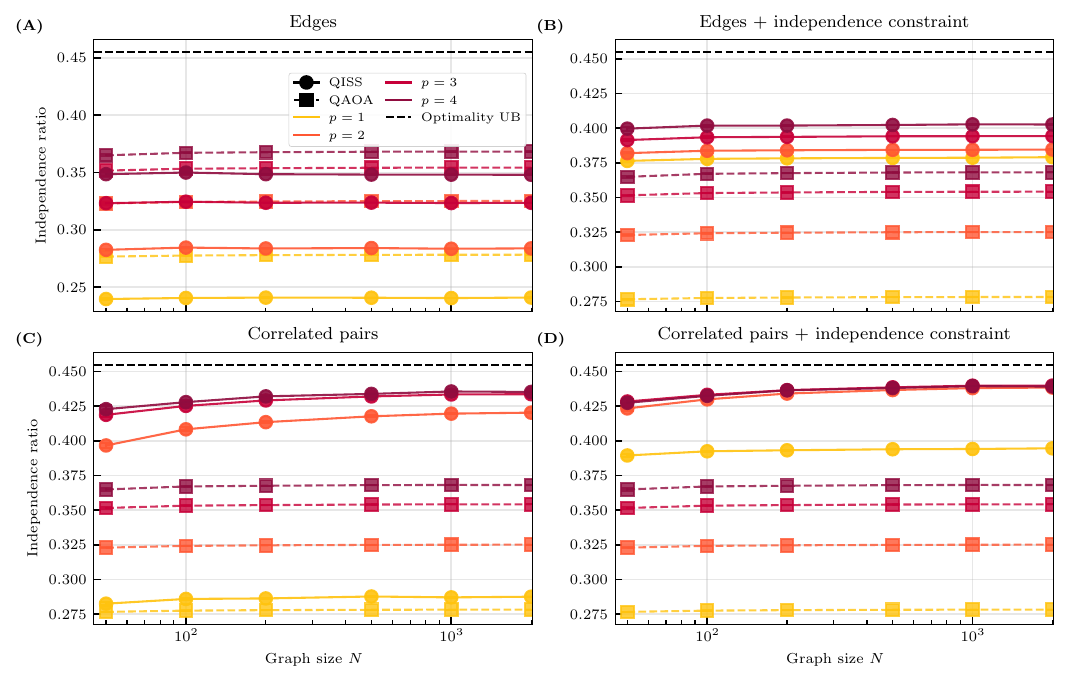}
    \caption{\textbf{QISS for MIS on 3-regular graphs based on QAOA.} The independence ratios obtained from QAOA and from the QISS for various QAOA depths $p$. (A), (B) Only the QAOA edge correlators and the polarizations, corresponding to factor collection $S$, are included in QISS. (C), (D) All correlated pairs and the polarizations, corresponding to factor collection $S'$, are included in QISS. The MIS problem has an independence constraint which we can enforce by simple post-processing of the obtained samples as described in the main text, panels (A), (C) and (B), (D) show respectively without and with postprocessing.}
    \label{fig:mis_means}
\end{figure}

For the MIS problem, the setup differs from MaxCut in two ways. First, the MIS Hamiltonian has no $\mathbb{Z}_2$ symmetry, so the one-body expectation values $\ev{Z_i}{\bm{\gamma}^{\star},\bm{\beta}^{\star}}$ are in general nonzero. Second, MIS is a constrained problem: a valid solution must be an independent set of $G$.

\subsubsection{QAOA correlations}

The absence of $\mathbb{Z}_2$ symmetry means that for pairs with non-overlapping light cones, $d_G(i,j)>2p$, the two-body correlators factorize as $\ev{Z_iZ_j}=\ev{Z_i}\ev{Z_j}$. We therefore supply QISS with the one-body expectations $\ev{Z_i}$ for all $i\in V$ together with the two-body correlators $\ev{Z_iZ_j}$ that do not factorize, i.e. those with $d_G(i,j)\le 2p$.

We use the same instance set as for MaxCut (40 random 3-regular graphs at each system size), compute the required depth-$p$ expectation values by tensor-network contraction, and take the angles $(\bm{\gamma}^{\star},\bm{\beta}^{\star})$ from Ref.~\cite{Wybo2025}. We then apply QISS (Sec.~\ref{sec:factor_sampling}) with either the edge set $S=E\cup V$ (Fig.~\ref{fig:mis_means}{(A, B)}) or the augmented set $S'=E'\cup V$ (Fig.~\ref{fig:mis_means}{(C, D)}), as defined in Sec.~\ref{sec:factor_distributions}. For each sample we compute the independence ratio $-\ev{H_{\mathrm{MIS}}^{\lambda=1}}{\bm{z}}/N$ from Eq.~\eqref{eq:H_MIS}.

In addition, MIS is a constrained problem, since a valid solution must correspond to an independent set of $G$. A candidate solution produced by QISS may still contain conflicting edges, i.e. edges for which both nodes are selected. Such samples can be converted into valid independent sets by an additional post-processing routine. We first identify conflicts by assigning to each vertex $i$ the conflict score
\begin{equation}
c(i)=\bigl|\{j\in N(i): z_i=z_j=+1\}\bigr|.
\end{equation}
Here $N(i)=\{j\in V \mid (i,j)\in E\}$ denotes the set of neighbors of vertex $i$. For every conflicting edge $(i,j)\in E$ with $z_i=z_j=+1$, we flip the endpoint with the largest conflict score, using the higher-index vertex as a tie breaker. After all conflicts have been removed, we greedily add vertices: for every vertex $i$ with $z_i=-1$, if
\begin{equation}
\forall j\in N(i):\ z_j=-1,
\end{equation}
we set $z_i=+1$. Applying this pruning routine improves the solution quality with respect to the MIS energy in Eq.~\eqref{eq:H_MIS}, with the unprocessed MIS energy serving as a lower bound. We report results with (Fig.~\ref{fig:mis_means}{(B,D)}) and without this constraint fixing procedure (Fig.~\ref{fig:mis_means}{(A,C)}). 
We observe that for MIS, unlike MaxCut, when only considering the edge set, the resampled results are worse than QAOA, see Fig.~\ref{fig:mis_means}{(A)}. So, in this case, even when the graphical model is locally tree like, the resampled averages drift away from the QAOA averages. This is a direct consequence of the absence of $\mathbb{Z}_2$ symmetry: with nonzero fields, one- and two-body factors overlap, so the surrogate no longer reproduces the QAOA correlators that determine the MIS energy. Only after pruning, the solution quality outperforms standalone QAOA, see Fig.~\ref{fig:mis_means}{(C)}.

\subsubsection{Algorithm comparison}

\begin{figure}
    \centering
    \includegraphics[width=0.5\textwidth]{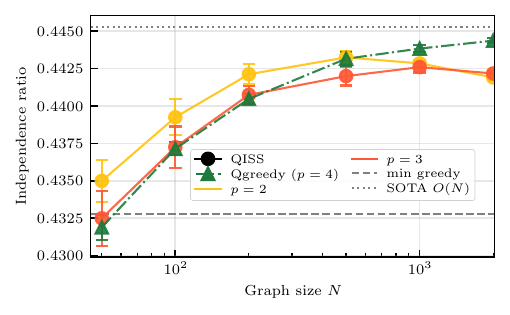}
    \caption{\textbf{QISS for MIS on 3-regular graphs compared to other algorithms} including the quantum-enhanced greedy of Ref.~\cite{Wybo2026}. For the QISS sampling solution we have kept the best out of 500 generated samples for each instance. The observed decline in the yellow and orange curves is a finite-size effect. The error bars show the standard error of the mean from the average over instances. The state-of-the-art value is taken from Ref.~\cite{Marino2020}.}
    \label{fig:mis_comparison}
\end{figure}

In Fig.~\ref{fig:mis_comparison}, we compare the independence ratios averaged over 40 instances at each system size $N$. For each instance, we took the best of the generated candidate solutions. We compare to classical baselines such as minimal greedy~\cite{Wormald1999} and the linear-prioritized search algorithm of Ref.~\cite{Marino2020}. 
In addition, we compare to the quantum-enhanced greedy (QGreedy) algorithm of Ref.~\cite{Wybo2026}. This algorithm builds an independent set greedily: at each step it adds the node with the largest expectation $\ev{Z_i}{\bm{\gamma}^{\star},\bm{\beta}^{\star}}$ to the independent set and then removes that node and its neighbors from the graph. Hence, it is an iterative approach in contrast to the sequential QISS approach. The performance of the QGreedy algorithm does not decline with system size $N$ and is scalable, in contrast to the slight decline of the resampled data with increasing $N$ (Fig.~\ref{fig:mis_comparison}). This is a finite-size effect arising from our reporting the best over a constant number of samples. The solution quality concentrates about its mean as the system grows, with fluctuations shrinking as $1/N$. The upper tail of the sample distribution, from which the best sample is drawn, therefore contracts toward the sample mean. We note that these sample means (averaged over instances) are shown in Fig.~\ref{fig:mis_means}{(D)}.

\section{Conclusion} \label{sec:concl}

We have introduced and benchmarked a classical post-processing method, Quantum-Informed Surrogate Sampling (QISS), that converts the local mean values produced by QAOA into a structured distribution, from which improved candidate solutions can be drawn by MCMC sampling. The construction requires no additional variational optimization and no further device access: the surrogate couplings follow directly from the measured quantum correlations, and sampling is performed classically.

Across our benchmarks, QISS improves substantially on the QAOA output at fixed circuit depth. This improvement hinges on the choice of correlator set. Feeding QISS only the edge correlators can reproduce the QAOA moments on a locally tree-like graph and yields no gain, while feeding it the augmented set of all correlators within the light cone, $d_G(i,j)\le 2p$, makes the factor graph non-tree-like and lets the surrogate move beyond the QAOA output. On MaxCut on 3-regular graphs, QISS built from depth $p>2$ correlators then exceeds vanilla tree QAOA at $p=17$, the largest depth with known tree angles, and is competitive with strong classical solvers such as Burer--Monteiro. 

Because the surrogate depends smoothly on the correlators, it is robust to noise in the estimated $\mu_{S_\alpha}$. We demonstrated this on the 54-qubit IQM Emerald QPU by showing that QISS recovers near-optimal solutions from the raw device correlators, on par with noiseless simulation, so that the device noise (and the error mitigation used to counter it) becomes largely irrelevant after post-processing.

The sampling step of QISS depends only on the correlators, not on the cost function, so the method extends readily to other combinatorial problems, including constrained ones. We demonstrate this explicitly on MIS, where QISS can be further improved by a light postprocessing step that enforces independence.

Finally, several extensions of QISS are natural. Our benchmarks use only weight-one and weight-two correlators on 3-regular graphs; incorporating higher-weight correlators, or structured encodings such as the Pauli-correlation encoding of Ref.~\cite{Sciorilli2025} could broaden its scope. Moreover, the applicability of QISS is also not limited to the NISQ era. Recent estimates for RWS-QAOA on MaxCut place the quantum--classical crossover at depth $p=6$ on few-thousand-node instances, requiring on the order of a million physical qubits at $90\%$ circuit fidelity~\cite{he2026regularizedwarmstartedquantumapproximate}. A post-processing layer that recovers comparable quality from lower-depth correlators could bring this crossover closer.

\section{Acknowledgments}
We thank Alessio Calzona, Martin Leib and Fedor \v{S}imkovic for helpful discussions and valuable feedback. We also acknowledge our colleagues at IQM for their support and for providing a collaborative research environment.

\appendix

\section{The Markov process satisfies detailed balance and ergodicity} \label{app:MCMC_proofs}

\begin{proposition}
Consider the probability distribution on $\{-1,+1\}^N$
\begin{equation} 
    P(\bm{z})
    \propto
    \prod_{S_{\alpha}\in S} 
    \bigl(1+\mu_{S_{\alpha}} \,\chi_{S_{\alpha}}(\bm{z})\bigr),
\end{equation}
where $\chi_{S_{\alpha}}(\bm{z})$ is any function of the spins $\{z_i : i \in S_\alpha\}$ and the product is strictly positive for all $\bm{z}$. 
Let the single-site sampler be defined as follows: at each step choose a site $i$ with probability $w_i>0$, $\sum_i w_i=1$, and resample $z_i$ from the conditional $P(z_i \mid \bm{z}_{-i})$. Then the resulting Markov chain satisfies detailed balance with respect to $P$.
\end{proposition}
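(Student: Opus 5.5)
We prove detailed balance directly from the transition kernel of the random-scan Gibbs sampler. Write $\pi(\bm z)=W(\bm z)/\mathcal{Z}$ with $W(\bm z)=\prod_{S_\alpha\in S}(1+\mu_{S_\alpha}\chi_{S_\alpha}(\bm z))>0$. The transition kernel is
\begin{equation}
K(\bm z,\bm z')=\sum_{i=1}^N w_i\,\mathbb{1}[\bm z'_{-i}=\bm z_{-i}]\,P(z'_i\mid \bm z_{-i}).
\end{equation}
The goal is to show $P(\bm z)K(\bm z,\bm z')=P(\bm z')K(\bm z',\bm z)$ for all pairs of configurations. We will do this term by term in the sum over sites $i$.

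\textbf{Step 1: the conditional is well defined.} Because $W>0$ everywhere, the marginal $P(\bm z_{-i})=P(\bm z_{-i},+1)+P(\bm z_{-i},-1)$ is strictly positive. Hence
\begin{equation}
P(z_i\mid\bm z_{-i})=\frac{P(\bm z)}{P(\bm z_{-i})}
\end{equation}
is well defined. In this ratio $\mathcal{Z}$ cancels. Factors $S_\alpha\notin T_i$ also cancel, since the hypothesis that $\chi_{S_\alpha}$ depends only on the spins in $S_\alpha$ makes them independent of $z_i$. This recovers the $w(z_i)$ formula of Sec.~\ref{sec:factor_sampling}. That formula is needed only for computability; the detailed-balance argument itself uses just the ratio above.

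\textbf{Step 2: term-by-term detailed balance.} Fix $i$. The $i$-th term of $K$ is nonzero only when $\bm z$ and $\bm z'$ agree off site $i$. In that case $\bm z_{-i}=\bm z'_{-i}$, and therefore
\begin{equation}
P(\bm z)\,P(z'_i\mid\bm z_{-i})=\frac{P(\bm z)\,P(\bm z')}{P(\bm z_{-i})},
\end{equation}
which is symmetric in $(\bm z,\bm z')$. If $\bm z$ and $\bm z'$ differ off site $i$, the indicator vanishes in both $K(\bm z,\bm z')$ and $K(\bm z',\bm z)$. Multiplying by $w_i$ and summing over $i$ gives detailed balance. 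The diagonal case $\bm z=\bm z'$ is trivial but is covered by the same computation.

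\textbf{Step 3: consequence.} Summing the detailed-balance identity over $\bm z$ shows that $P$ is stationary. There is no genuine obstacle in this argument. The only points needing care are the positivity assumption, which rules out division by zero in the conditional, and the locality of $\chi_{S_\alpha}$, which is used only to justify the local-weight formula. The fixed-order sweep used in practice is a composition of single-site kernels. Each of these kernels is $P$-reversible, so the composition preserves $P$. It need not itself be reversible, which is why the proposition is stated for the random-scan chain.
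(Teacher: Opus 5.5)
Your proof is correct and follows essentially the same route as the paper's. Both arguments establish detailed balance separately for each single-site heat-bath kernel, using that $P(\bm z)\,P(z'_i\mid\bm z_{-i})=P(\bm z)P(\bm z')/P(\bm z_{-i})$ is symmetric when the environments agree (and trivially $0=0$ otherwise), and then take the $w_i$-weighted convex combination; your added remarks on positivity and on fixed-order sweeps preserving but not necessarily being reversible for $P$ are accurate extras not needed for the statement.
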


\begin{proof}
Fix a site $i$ and denote by $P_i(\bm{x}\to\bm{y})$ the transition kernel of a single-site update at $i$:
\[
P_i(\bm{x}\to\bm{y})
=
\begin{cases}
P\bigl(y_i \mid \bm{x}_{-i}\bigr), & \text{if } \bm{x}_{-i}=\bm{y}_{-i},\\[2pt]
0, & \text{otherwise.}
\end{cases}
\]
We first show detailed balance for $P_i$:
\begin{equation}\label{eq:db-single}
    P(\bm{x})\,P_i(\bm{x}\to\bm{y})
    = 
    P(\bm{y})\,P_i(\bm{y}\to\bm{x})
    \qquad \forall\,\bm{x},\bm{y}.
\end{equation}

If $\bm{x}_{-i}\neq\bm{y}_{-i}$ then $P_i(\bm{x}\to\bm{y})=P_i(\bm{y}\to\bm{x})=0$, so \eqref{eq:db-single} holds trivially.  
Otherwise, write $\bm{x}=(x_i,\bm{u})$ and $\bm{y}=(y_i,\bm{u})$ with common environment $\bm{u}=\bm{x}_{-i}=\bm{y}_{-i}$. By the definition of conditional probability,
\[
P_i(\bm{x}\to\bm{y})
= P(y_i \mid \bm{u})
= \frac{P(y_i,\bm{u})}{\sum_{z_i'=\pm1} P(z_i',\bm{u})},
\]
and similarly
\[
P_i(\bm{y}\to\bm{x})
= P(x_i \mid \bm{u})
= \frac{P(x_i,\bm{u})}{\sum_{z_i'=\pm1} P(z_i',\bm{u})}.
\]
Therefore
\[
P(\bm{x})\,P_i(\bm{x}\to\bm{y})
= P(x_i,\bm{u}) \,
  \frac{P(y_i,\bm{u})}{\sum_{z_i'} P(z_i',\bm{u})},
\qquad
P(\bm{y})\,P_i(\bm{y}\to\bm{x})
= P(y_i,\bm{u}) \,
  \frac{P(x_i,\bm{u})}{\sum_{z_i'} P(z_i',\bm{u})},
\]
and the right-hand sides are equal by symmetry of the numerator, proving \eqref{eq:db-single}.  
Note that we only used the fact that the conditional probablilities of $P$ can be efficiently computed: the explicit factor form \eqref{eq:factors} is irrelevant for detailed balance itself.

The transition kernel of the MCMC process is the convex combination
\[
P(\bm{x}\to\bm{y}) = \sum_{i=1}^N w_i\, P_i(\bm{x}\to\bm{y}).
\]
Using \eqref{eq:db-single},
\[
P(\bm{x}) P(\bm{x}\to\bm{y})
= \sum_i w_i\, P(\bm{x}) P_i(\bm{x}\to\bm{y})
= \sum_i w_i\, P(\bm{y}) P_i(\bm{y}\to\bm{x})
= P(\bm{y}) P(\bm{y}\to\bm{x}),
\]
so $P(\bm{x}\to\bm{y})$ satisfies detailed balance with respect to $P$.
\end{proof}

Because $P(\bm{z})>0$ for all $\bm{z}$, every single-site update has strictly positive probability to flip any spin in any configuration, and also to keep it unchanged. This yields irreducibility (via sequences of single-spin flips) and aperiodicity (nonzero self-loop at every state), hence ergodicity.

\begin{proposition}[Ergodicity]
Consider the distribution
\begin{equation}
    P(\bm{z})
    \propto
    \prod_{S_{\alpha}\in S} 
    \bigl(1+\mu_{S_{\alpha}} \chi_{S_{\alpha}}(\bm{z})\bigr),
    \qquad \bm{z}\in\{-1,+1\}^N,
\end{equation}
and assume $p(\bm{z})>0$ for all configurations $\bm{z}$.  
Let the single-site sampler be defined as follows: at each step, choose a site $i$ with probability $w_i>0$, $\sum_i w_i=1$, and resample $z_i$ from $P(z_i\mid\bm{z}_{-i})$. Then the resulting Markov chain on $\{-1,+1\}^N$ is ergodic, i.e. irreducible and aperiodic, and hence converges to $P$ from any initial configuration.
\end{proposition}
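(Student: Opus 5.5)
The plan is to work directly with the transition kernel $K(\bm{x}\to\bm{y})=\sum_i w_i P_i(\bm{x}\to\bm{y})$ from the proof of the previous proposition. Everything rests on one uniform lower bound on the single-site conditionals. Since $P(\bm{z})>0$ on the finite set $\{-1,+1\}^N$, we can set $m=\min_{\bm z}P(\bm z)>0$ and $M=\max_{\bm z}P(\bm z)$. For any site $i$, environment $\bm u$, and value $s\in\{\pm1\}$,
\[
P(s\mid\bm u)=\frac{P(s,\bm u)}{P(+1,\bm u)+P(-1,\bm u)}\ \ge\ \frac{m}{2M}=:\epsilon>0 .
\]
Hence a single update at $i$ flips $z_i$ with probability at least $\epsilon$, and also leaves it unchanged with probability at least $\epsilon$. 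Combined with $w_{\min}=\min_i w_i>0$, this gives $K(\bm x\to\bm x)\ge w_{\min}\epsilon$. It also gives $K(\bm x\to\bm x^{(i)})\ge w_i\epsilon$, where $\bm x^{(i)}$ denotes $\bm x$ with spin $i$ flipped.

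\textbf{Irreducibility.} Take any $\bm x,\bm y$ and let $D=\{i: x_i\neq y_i\}$, listed in some order $i_1,\dots,i_k$. Flipping these spins one at a time gives a path $\bm x=\bm x_0\to\bm x_1\to\dots\to\bm x_k=\bm y$ in the hypercube. Each step has probability at least $w_{\min}\epsilon$, so $K^k(\bm x\to\bm y)\ge (w_{\min}\epsilon)^k>0$, and the chain is irreducible. If $\bm x=\bm y$, the positive self-loop covers the case.

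\textbf{Aperiodicity.} The self-loop bound $K(\bm x\to\bm x)>0$ puts $1$ in the set of return times of every state, so every state has period $1$.

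\textbf{Convergence.} The previous proposition gives detailed balance, so $P$ is stationary for $K$. A finite, irreducible, aperiodic chain has a unique stationary distribution and converges to it from any initial distribution. I would cite the standard convergence theorem for finite Markov chains for this. Alternatively, it can be made explicit: padding the hypercube paths with self-loops shows that $K^N(\bm x\to\bm y)\ge (w_{\min}\epsilon)^N$ for all $\bm x,\bm y$. This is a Doeblin minorization, and it gives geometric convergence in total variation to $P$.

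The only step needing care is the padding. Paths have length $k=|D|\le N$, so they must be extended to the common length $N$ by inserting self-loops before invoking the minorization. The argument never uses the factor form of $P$. The hypothesis labelled $p(\bm z)>0$ in the statement is read as $P(\bm z)>0$, and strict positivity is all that is needed.
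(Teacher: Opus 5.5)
Your proof is correct and matches the paper's argument: irreducibility via a path of single-spin flips, each with probability $w_{i_k}P(\cdot\mid\cdot)>0$; aperiodicity via the strictly positive self-loop $\sum_i w_i P(z_i\mid\bm z_{-i})>0$; and convergence from detailed balance plus the standard theorem for finite chains. The uniform bound $\epsilon=m/(2M)$ and the padded Doeblin minorization $K^N(\bm x\to\bm y)\ge(w_{\min}\epsilon)^N$ are a correct quantitative addition that gives a geometric rate, which the paper does not pursue; the underlying route is the same.
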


\begin{proof}
\noindent\textbf{Irreducibility.}
Take any two configurations $\bm{x},\bm{y}\in\{-1,+1\}^N$. There is a path from $\bm{x}$ to $\bm{y}$ that flips the spins one by one:
\[
\bm{x} = \bm{z}^{(0)} \to \bm{z}^{(1)} \to \cdots \to \bm{z}^{(K)} = \bm{y},
\]
where each $\bm{z}^{(k+1)}$ differs from $\bm{z}^{(k)}$ at exactly one site $i_k$.  
For the random-scan kernel, the probability of the transition $\bm{z}^{(k)}\to \bm{z}^{(k+1)}$ is
\[
P\bigl(\bm{z}^{(k)} \to \bm{z}^{(k+1)}\bigr)
= w_{i_k} \, P\bigl(z^{(k+1)}_{i_k} \mid \bm{z}^{(k)}_{-i_k}\bigr).
\]
By assumption $P(\bm{z})>0$ for all $\bm{z}$, so every conditional $P(z_i\mid\bm{z}_{-i})$ assigns strictly positive probability to both $z_i=\pm1$. Hence
\[
w_{i_k} > 0
\quad \text{and} \quad
P\bigl(z^{(k+1)}_{i_k} \mid \bm{z}^{(k)}_{-i_k}\bigr) > 0
\quad\Rightarrow\quad
P\bigl(\bm{z}^{(k)} \to \bm{z}^{(k+1)}\bigr) > 0.
\]
The product of these positive probabilities along the path is also positive, so the chain can reach $\bm{y}$ from $\bm{x}$ with nonzero probability in finitely many steps. Since $\bm{x},\bm{y}$ were arbitrary, the chain is irreducible.\\

\noindent\textbf{Aperiodicity.}
For any configuration $\bm{z}$, consider the probability of remaining in $\bm{z}$ in one step:
\[
P(\bm{z}\to\bm{z})
= \sum_{i=1}^N w_i \, P(z_i \mid \bm{z}_{-i}).
\]
Again by strict positivity of $P$, we have $P(z_i\mid \bm{z}_{-i})>0$ for all $i$, and $w_i>0$ by assumption. Thus each term $w_i P(z_i\mid \bm{z}_{-i})$ is positive, so $P(\bm{z}\to\bm{z})>0$.  
A Markov chain on a finite state space with a strictly positive self-loop at every state has period $1$ at every state, hence is aperiodic. \\

The chain is irreducible and aperiodic and, by detailed balance (proved separately), has $P$ as stationary distribution. Therefore it is ergodic and converges to $P$ from any initial $\bm{z}$.
\end{proof}

\section{Thermal correlations}\label{app:thermal}
\begin{figure}
    \centering
    \includegraphics[width=\textwidth]{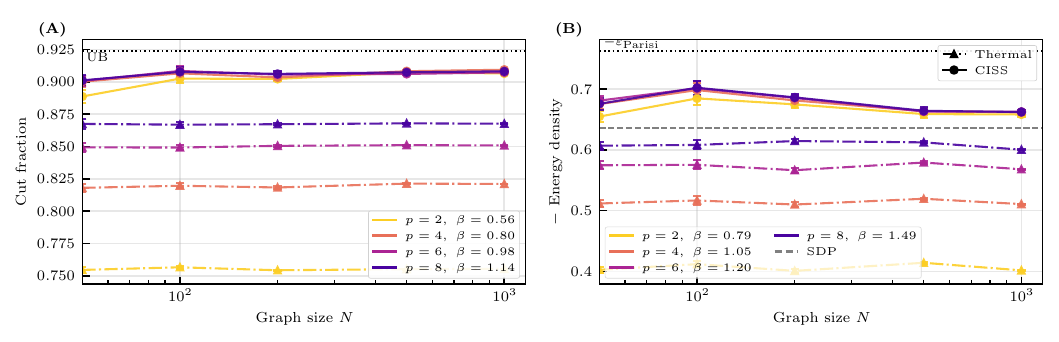}
    \caption{\textbf{Resampling from thermal correlations for the MaxCut and SK model.}
    Candidate solutions obtained by sampling the surrogate factor model (CISS, Sec.~\ref{sec:factor_distributions}) built from thermal two-point correlations, estimated by parallel-tempering MCMC at inverse temperature $\beta$. In both panels $\beta$ is fixed by matching the thermal energy density to that of depth-$p$ QAOA. Dash-dotted lines (triangles) show the raw thermal input and solid lines show the resampled output. Error bars show the standard error of the mean. \textbf{(A)} MaxCut on $3$-regular graphs. The dotted line marks the optimality upper bound. The resampled values (solid) lie close to UB and depend only weakly on $\beta$. \textbf{(B)} SK model. The dashed line marks the Goemans--Williamson SDP value and the dotted line the Parisi energy. The resampled values exceed the SDP benchmark but saturate well below Parisi, and colder input does not improve them further. The improvement upon the raw input diminishes with increasing system size.}
\label{fig:sk}
\end{figure}

So far we have used correlations obtained from QAOA states as the input to QISS. To probe to what extent the resampling procedure depends on the specific structure of the input correlations, rather than on generic properties of the used two-point functions, we here perform experiments with correlations obtained from a classical thermal state at different inverse temperatures $\beta$. We refer to this variant as Correlation-Informed Surrogate Sampling (CISS).

Concretely, for a problem instance with cost Hamiltonian $H_C$ we take the Gibbs state $P_\beta(\bm z)\propto e^{-\beta H_C}$ and use its two-point correlations as input to the factor model of Eq.~\eqref{eq:factors}, instead of the QAOA expectation values. The correlations are estimated by parallel-tempering Markov-chain Monte Carlo~\cite{Hukushima1996}, using $56$ replicas linearly spaced over $\beta\in[\beta_{\min},\beta_{\max}]$, with $\beta_{\min}=0.4$ and $\beta_{\max}=3.8$, $3500-5000$ burn-in sweeps followed by $6500-14000$ measurement sweeps and replica-exchange attempts every sweep. Correlations are accumulated at the target $\beta$ replica. We monitor equilibration through the integrated autocorrelation time of the energy.

We consider the thermal correlations for two problems: MaxCut on random $3$-regular graphs and the Sherrington--Kirkpatrick (SK) model. The SK model is the fully-connected Ising spin glass with Hamiltonian
\begin{equation}
    H_{SK}(\bm{z}) = -\sum_{i<j} J_{ij}\, z_i z_j
\end{equation}
where the couplings $J_{ij}$ are i.i.d. Gaussian with zero mean and variance $1/N$. The two problems probe complementary structural regimes: the $3$-regular instances are sparse and of bounded degree, whereas the SK model is dense and fully connected. 

For MaxCut on $3$-regular graphs Fig.~\ref{fig:sk}(A), resampling from thermal correlations still improves substantially upon the raw thermal cut fractions. In contrast to the QAOA case, however, the improvement is not monotonic in the energy of the input correlations: the resampled cut fraction saturates at the lowest depths so that colder thermal input no longer yields better solutions. This is not a ceiling of the surrogate sampling itself, since resampling from the RWS-QAOA correlations reaches higher cut fractions (see Fig.~\ref{fig:maxcut_ws_means}). Rather, the surrogate samples a pairwise product model at its own fixed effective temperature, so sharpening the input couplings does not cool the sampler, and the plateau reflects a property of the (glassy) thermal input rather than of the resampling. A plausible mechanism is that in the glassy phase, the thermal state fragments into many competing states, whose averaged correlations no longer reflect a single coherent assignment.

Similarly, for the SK model shown in Fig.~\ref{fig:sk}(B), resampling again improves markedly upon the raw thermal input, lifting the energy densities above the SDP value, but saturating well below the Parisi energy. Here the resampled values for $\beta\gtrsim1$ again collapse onto a single curve, so that colder input is no longer converted into better solutions; this saturation sets in near the spin-glass transition $\beta_c=1$~\cite{Parisi1979}.

\bibliography{biblio}

\end{document}